\title[ ]{Continuous quasiperiodic Schr\"odinger  operators with Gordon type potentials}
\author{Wencai Liu}
\address[Wencai Liu]{Department of Mathematics, University of California, Irvine, California 92697-3875, USA} \email{liuwencai1226@gmail.com}
\newcommand{\R}{\mathbb{R}}
\newcommand{\Z}{\mathbb{Z}}
\theoremstyle{plain}
\newtheorem{theorem}{Theorem}[section]
\newtheorem{corollary}[theorem]{Corollary}
\newtheorem{lemma}[theorem]{Lemma}
\theoremstyle{definition}
\newtheorem{remark}[theorem]{Remark}
\begin{document}

 \newcommand{\N}{\mathbb{N}}

\begin{abstract}

Let us concern the quasi-periodic Schr\"odinger operator  in  the continuous case,
\begin{equation*}
    (Hy)(x)=-y^{\prime\prime}(x)+V(x,\omega x)y(x),
\end{equation*}
where $V:(\R/\Z)^2\to \R$ is  piecewisely $\gamma$-H\"older continuous with respect to the second variable.
Let $L(E)$ be the Lyapunov exponent of $Hy=Ey$.
Define $\beta(\omega)$ as
\begin{equation*}
   \beta(\omega)= \limsup_{k\to \infty}\frac{-\ln ||k\omega||}{k}.
\end{equation*}
We prove that $H$ admits no eigenvalue in regime $\{E\in\R:L(E)<\gamma\beta(\omega)\}$.

\end{abstract}
\maketitle

 \section{Introduction}
 In this note, we study the continuous quasi-periodic Schr\"odinger operator, which is given by
 \begin{equation*}
    (Hy)(x)=-y^{\prime\prime}(x)+V(x,\omega x)y(x),
\end{equation*}
where $V:(\R/\Z)^2\to \R$ is   the potential  and $\omega\in\R$ is  frequency.

  We are interested in a particular class of functions $V$.
   $V:(\R/\Z)^2\to \R$ is  called   piecewisely $\gamma$-H\"older continuous with respect to the second variable, denoted by $A_{\gamma}$, if  $V:(\R/\Z)^2\to \R$ is measurable and  there exist $a_1<a_2<\cdots< a_m$ ($a_1=0$ and $a_m=1$) such that
\begin{equation}\label{holder}
 \sup_{(x,y)\in (\R/\Z)^2}|V(x,y)|  + \sup _{x\in \R/\Z}\sup _{1\leq i\leq m-1}\sup_{y_1,y_2\in[a_i,a_{i+1}]\atop y_1\neq y_2}|\frac{V(x,y_1)-V(x,y_2)|}{|y_1-y_2|^{\gamma}}<\infty.
\end{equation}
We emphasize  that   $V\in A_{\gamma}$ implies  $V(x,y)$ is bounded and continuous with
respect to $y$.  $A_{\gamma}$ contains a special case $V=V_1(x)+V_2(\omega x)$, where $V_1:\R/\Z\to \R$  is  bounded measurable function   and $V_2:\R/\Z\to \R$  is  piecewisely $\gamma$-H\"older continuous. We always assume that potentials $V\in A_{\gamma}$  in this paper.

Recently, there has been  a remarkable development of arithmetically spectral   transition (singular spectrum and Anderson localization) for discrete  quasiperiodic operator \cite{marx},  in particular for explicit models: almost Mathieu operator \cite{liu1,liu2,zhou,Ilya}, Maryland model \cite{liu} and extended Harper model \cite{han}(Jacobi operator).
For the continuous, arithmetic phase transitions are currently very far from being established.
The quantitative arguments for discrete case is to show the absence of eigenvalues if
 the frequency can be approximated by a rational number well (Gordon type potential in one dimension) , which improved  the  previous   results\footnote{We should mention that
    Gordon and Nemirovski also obtained   the absence of  eigenvalues for
 discrete   Schr\"odinger operators with Gordon type potentials  in higher dimensions \cite{gordon2017absence}. It is a interesting problem to make this arithmetic  sharp.}   \cite{gordon1976point,simon1982almost}, obtaining sharp thresholds for the smallness of small denominators in terms of the Lyapunov exponents.
 For various other recent developments on the Gordon-type potentials, see Damanik-Stolz \cite{damanik2000generalization},  Damanik's survey paper \cite{damanik} and references therein, and Jitomirskaya-Zhang \cite{zhang}. The purpose of this note is to obtain a similar sharp result for the continuous case.
 We should mention that the   localization part  for continuous case is only  known for  a  full Lebesgue measure subset of Diophantine frequencies \cite{MR3623250}.

 \begin{theorem}\label{thm}
Let $H$ be a quasiperiodic Schr\"odinger operator,
\begin{equation*}
    (Hy)(x)=-y^{\prime\prime}(x)+V(x,\omega x)y(x),
\end{equation*}
where $V:(\R/\Z)^2\to \R$ is piecewisely  $\gamma$-H\"older continuous with respect to the second variable.
Let $L(E)$ be the Lyapunov exponent of $Hy=Ey$.
Define $\beta(\omega)$ as
\begin{equation}\label{equ2}
   \beta(\omega)= \limsup_{k\to \infty}\frac{-\ln ||k\omega||}{k}.
\end{equation}
Then  $H$ does not have any eigenvalue in the regime $\{E\in\R:L(E)<\gamma\beta(\omega)\}$.

\end{theorem}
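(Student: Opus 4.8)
The plan is to run a quantitative Gordon-type argument adapted to the continuous, quasiperiodic, piecewise-H\"older setting, with the sharp constant coming from comparison against periodic approximants. Suppose, for contradiction, that $E$ is an eigenvalue with $L(E)<\gamma\beta(\omega)$ and normalized eigenfunction $\psi\in L^2(\R)$. Since $V$ is bounded, $\psi''=(V(\cdot,\omega\cdot)-E)\psi\in L^2$, so $\psi\in H^2(\R)$ and the phase vector $\Phi(x)=(\psi(x),\psi'(x))^{\mathrm T}$ satisfies $\Phi(x)\to0$ as $|x|\to\infty$. First I would pass to a sequence of rational approximants $p_k/q_k$ realizing $\beta(\omega)$, so that $\delta_k:=\|q_k\omega\|=e^{-(\beta+o(1))q_k}$ and $p_k=q_k\omega-\delta_k\in\Z$. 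Using periodicity of $V$ in its first variable, $\psi(\cdot\pm q_k)$ solves the equation with potential $V(x,\omega x\pm\delta_k)$, while $\tilde V_k(x):=V(x,(p_k/q_k)x)$ is exactly $q_k$-periodic; all transfer matrices $M(x,y)$ lie in $SL(2,\R)$.

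Next I set up the Gordon dichotomy. Let $\tilde M_k\in SL(2,\R)$ be the monodromy of the $q_k$-periodic operator $\tilde H_k=-d^2/dx^2+\tilde V_k$ at energy $E$, with trace $t_k$. For any solution of $\tilde H_k y=Ey$ the Cayley--Hamilton relation gives the centered three-term identity $\Phi(q_k)+\Phi(-q_k)=t_k\Phi(0)$ and the forward identity $\Phi(2q_k)=t_k\Phi(q_k)-\Phi(0)$. For the genuine eigenfunction these hold only up to a defect $\mathcal D_k$. Splitting on $|t_k|$ as in Gordon's lemma --- using the centered identity when $|t_k|\ge1$ and the forward identity when $|t_k|<1$ --- yields, in either case,
\[
\max\{\norm{\Phi(q_k)},\norm{\Phi(-q_k)},\norm{\Phi(2q_k)}\}\ \ge\ \tfrac12\big(\norm{\Phi(0)}-\norm{\mathcal D_k}\big).
\]
Because $\Phi(jq_k)\to0$ while $\norm{\Phi(0)}$ is a fixed positive constant, a contradiction follows the moment we show $\norm{\mathcal D_k}<\tfrac12\norm{\Phi(0)}$ for large $k$.

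The whole content is therefore the defect estimate, and this is the step I expect to be the main obstacle. The defect is, up to harmless factors, a sum of transfer-matrix differences $M^{\mathrm{true}}(q_k,0)-\tilde M_k$ applied to $\Phi(0)$, each accumulating the pointwise potential error $|V(x,\omega x)-\tilde V_k(x)|$. The naive Duhamel/Gronwall bound controls these by $e^{\kappa q_k}$ with $\kappa\sim\sup|V-E|$, which only yields the non-sharp threshold $\gamma\beta>\kappa$ (essentially the classical Damanik--Stolz estimate). The point of the theorem is to replace $\kappa$ by $L(E)$. The mechanism I would use is the hyperbolic (Floquet/exponential-dichotomy) structure of the periodic approximant: when $|t_k|>2$ the solutions of $\tilde H_k y=Ey$ split into Floquet modes growing and decaying at the rate $L_k:=q_k^{-1}\operatorname{arccosh}(|t_k|/2)$, equivalently the periodic Green's function decays at rate $L_k$; performing the perturbation estimate in this dichotomy frame makes the error accumulate at the expanding rate $L_k$ rather than $\kappa$, giving
\[
\norm{\mathcal D_k}\ \le\ C\,\mathrm{poly}(q_k)\,e^{L_k q_k}\,\delta_k^{\gamma}.
\]
The case $|t_k|\le2$ (energy in a band of $\tilde H_k$) is easier, since then $\tilde M_k$ is elliptic/parabolic and the relevant partial monodromies stay bounded, so the defect is $O(\mathrm{poly}(q_k)\,\delta_k^{\gamma})$. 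Finally I would invoke upper semicontinuity of the Lyapunov exponent under periodic approximation, $\limsup_k L_k\le L(E)$, so that the bound reads $\norm{\mathcal D_k}\le C\,\mathrm{poly}(q_k)\,e^{(L(E)-\gamma\beta+o(1))q_k}\to0$ precisely in the regime $L(E)<\gamma\beta(\omega)$. The exponent $\gamma\beta$ (rather than $\beta$) is forced by the H\"older modulus: the pointwise potential error is $\delta_k^{\gamma}$, not $\delta_k$.

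One genuinely continuous/piecewise complication must be absorbed into the above: across the finitely many discontinuity lines $y=a_i$ the bound $|V(x,\omega x)-V(x,\omega x\pm\delta_k)|\lesssim\delta_k^{\gamma}$ fails and the difference is only $O(\sup|V|)$. However, the bad set $\{x\in[-q_k,2q_k]:\operatorname{dist}(\omega x\bmod 1,\{a_i\})\le\delta_k\}$ has Lebesgue measure $\lesssim q_k\delta_k$, so its contribution to the defect is $\lesssim\mathrm{poly}(q_k)\,e^{L_kq_k}\delta_k$, which is dominated by the main $\delta_k^{\gamma}$ term because $\gamma\le1$ and $\delta_k\to0$. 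Thus the discontinuities do not affect the threshold, and the argument closes.
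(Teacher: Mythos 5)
Your overall skeleton (the Gordon three-point dichotomy, a defect of size $e^{\cdot q_k}\delta_k^{\gamma}$, and the measure estimate near the discontinuity lines $y=a_i$) is the right one, and your handling of the discontinuities matches the paper's Lemma \ref{lerough}. But the central mechanism you propose for obtaining the sharp rate --- performing the Duhamel/perturbation estimate ``in the dichotomy frame'' of the periodic approximant so that errors accumulate at the Floquet rate $L_k$ --- does not work, and this is precisely where the content of the theorem lies. Floquet hyperbolicity of $\tilde H_k$ controls only the powers of the monodromy matrix $\tilde M_k^n$; it gives no control on the transfer matrices $\tilde M(x,t)$ at \emph{intermediate} times $0\le t\le x\le q_k$ within a single period, and these are exactly the factors appearing in the Duhamel integral $\int_0^{q_k}\tilde M(q_k,t)\,\Delta(t)\,M(t,0)\,dt$. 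Within one period the only a priori bound is $e^{\kappa q_k}$ with $\kappa\sim 1+\sqrt{\sup|V|+|E|}$ (think of a periodic potential with a large barrier inside one period: the monodromy can be elliptic while $\tilde M(q_k/2,0)$ is exponentially large), so your estimate silently reverts to the non-sharp Damanik--Stolz threshold you set out to beat. The same objection applies to your ``easier'' case $|t_k|\le 2$: boundedness of the powers $\tilde M_k^n$ does not bound the partial monodromies. Moreover, the upper semicontinuity $\limsup_k L_k\le L(E)$ that you invoke is itself nontrivial here, since the potential is discontinuous (piecewise H\"older) and the Lyapunov exponent is in general badly behaved under perturbations of the frequency for discontinuous cocycles; proving it would require the same kind of quantitative work you are trying to avoid.

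The missing ingredient is a uniform upper bound on the \emph{true} quasiperiodic cocycle: since $\omega$ is irrational, the torus flow is uniquely ergodic, and Furman's multiplicative ergodic theorem for uniquely ergodic systems \cite{furman1997multiplicative} gives $\|T(E,x,y)\|\le Ce^{(L(E)+\varepsilon)|x-y|}$ with $C$ uniform in the phase --- this is the paper's estimate \eqref{equ4} and the crux of its proof. Once this is in hand, periodic approximants are unnecessary: the paper compares true transfer matrices over shifted blocks, namely $T(E,0,q_{n_k})$ with $T(E,q_{n_k},2q_{n_k})$ and $T(E,0,-q_{n_k})$ with $T(E,0,q_{n_k})^{-1}$, so that the Duhamel integrand is a product of true transfer matrices against the potential increment $|V(t,\omega t)-V(t,\omega(t+q_{n_k}))|\lesssim \|q_{n_k}\omega\|^{\gamma}$, yielding the defect bound $Ce^{(L(E)-\gamma\beta+\varepsilon)q_{n_k}}$ directly; Simon's $SL(2,\R)$ lemma $\max\{\|B^2\varphi\|,\|B\varphi\|,\|B^{-1}\varphi\|\}\ge \tfrac14$ with $B=T(E,0,q_{n_k})$ then closes the Gordon argument. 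Your periodic-approximant route could be repaired by importing this same Furman bound together with a Gronwall bootstrap, but at that point the approximant and its Floquet data buy you nothing.
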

\begin{remark}
\begin{itemize}
  \item  By the  recent results of \cite{liu,zhou},  the statement of Theorem \ref{thm} for the case $\gamma=1$ is sharp for discrete Schr\"odinger operators with regular and singular potentials. We refer the reader to   \cite{yang} for the potentials with finitely many singular points. We expect that Theorem \ref{thm}  is sharp for $\gamma=1$ and even for $0<\gamma<1$.
  \item  The potential $V$ here is only H\"older continuous. Such Schr\"odinger operator with rough potential
was studied in \cite{mavi,han1}
\end{itemize}

\end{remark}
As a direct corollary, we obtain
\begin{corollary}\label{cor}
Let $H$ be a quasiperiodic Schr\"odinger operator,
\begin{equation*}
    (Hy)(x)=-y^{\prime\prime}(x)+V(x,\omega x)y(x),
\end{equation*}
where $V:(\R/\Z)^2\to \R$ is a H\"older continuous   with respect to the second variable.
Suppose  the frequency $\omega$ satisfies
\begin{equation*}
    \limsup_{k\to \infty}\frac{-\ln ||k\omega||}{k}=\infty.
\end{equation*}
Then $H$ does not have any eigenvalue.

\end{corollary}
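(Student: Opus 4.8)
The plan is to deduce the corollary directly from Theorem \ref{thm} by showing that the hypothesis $\beta(\omega)=\infty$ forces the excluded regime $\{E\in\R:L(E)<\gamma\beta(\omega)\}$ to coincide with all of $\R$. Since $V$ is assumed H\"older continuous with respect to the second variable, it is $\gamma$-H\"older for some fixed exponent $\gamma>0$, so $V\in A_{\gamma}$ and Theorem \ref{thm} applies with this $\gamma$. The stated frequency condition $\limsup_{k\to\infty}\frac{-\ln\|k\omega\|}{k}=\infty$ is precisely $\beta(\omega)=\infty$, and because $\gamma>0$ we get $\gamma\beta(\omega)=\infty$ as well.

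The one point requiring verification is that the Lyapunov exponent $L(E)$ is finite for every real $E$, so that the strict inequality $L(E)<\gamma\beta(\omega)=\infty$ holds for all $E\in\R$. I would recall that $L(E)$ is defined through the norm growth of the transfer (monodromy) matrices of $Hy=Ey$, written as a first-order system for $(y,y')$. The definition of $A_{\gamma}$ in \eqref{holder} includes the term $\sup_{(x,y)\in(\R/\Z)^2}|V(x,y)|<\infty$, so $V$ is uniformly bounded. A standard Gronwall estimate on this system then bounds the transfer matrix over any fixed length by $\exp(C(E))$, with $C(E)$ depending only on the sup-norm of $V$ and on $|E|$. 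Hence $L(E)\le C(E)<\infty$ for each fixed $E$, and in particular $L(E)<\infty$ for all $E\in\R$.

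Combining these observations, the regime appearing in Theorem \ref{thm} equals $\{E\in\R:L(E)<\infty\}=\R$. Therefore Theorem \ref{thm} asserts that $H$ has no eigenvalue anywhere in $\R$, which is exactly the conclusion of the corollary. I expect no genuine obstacle: the only content beyond invoking Theorem \ref{thm} is the elementary finiteness of the Lyapunov exponent for bounded potentials, together with the observation that a positive multiple of an infinite quantity remains infinite.
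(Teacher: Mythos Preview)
Your proposal is correct and matches the paper's approach: the paper states the result as a direct corollary of Theorem~\ref{thm} without further argument, and your write-up simply fills in the obvious details (that $V\in A_{\gamma}$ for some $\gamma>0$, that $\beta(\omega)=\infty$, and that $L(E)<\infty$ via a Gronwall bound on the first-order system for the bounded potential). There is nothing to add.
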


If the analytical  potential $V$ is small or the energy $E$ is large, the Lyapunov exponent is zero in the spectrum \cite{Hou}. Thus we have the following result.
\begin{corollary}
Let $H_\lambda$ be a quasiperiodic Schr\"odinger operator,
\begin{equation*}
    (H_\lambda y)(x)=-y^{\prime\prime}(x)+\lambda V(x,\omega x)y(x),
\end{equation*}
where $V:(\R/\Z)^2\to \R$ is  analytic.
Suppose  the frequency $\omega$ satisfies
\begin{equation*}
    \limsup_{k\to \infty}\frac{-\ln ||k\omega||}{k}>0.
\end{equation*}
Then there exists $\lambda_0>0$ such that $H_\lambda$ does not have any eigenvalue for $|\lambda|\leq \lambda_0$.
\end{corollary}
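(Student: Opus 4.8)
The plan is to deduce this corollary directly from Theorem \ref{thm} combined with the smallness of the Lyapunov exponent for small analytic potentials. The overall strategy is threefold: any eigenvalue of $H_\lambda$ must lie in its spectrum $\sigma(H_\lambda)$; for small $\lambda$ the Lyapunov exponent vanishes on the entire spectrum; and a vanishing Lyapunov exponent sits strictly inside the no-eigenvalue regime of Theorem \ref{thm} precisely because $\beta(\omega)>0$.

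First I would verify that Theorem \ref{thm} applies to each $H_\lambda$ with the choice $\gamma=1$. An analytic function $V:(\R/\Z)^2\to\R$ is smooth on the compact torus, hence Lipschitz in the second variable uniformly in the first; taking the trivial partition $a_1=0,\,a_2=1$ in \eqref{holder} shows $\lambda V\in A_1$ for every $\lambda\in\R$. Consequently Theorem \ref{thm} asserts that $H_\lambda$ has no eigenvalue in the regime $\{E\in\R:L_\lambda(E)<\beta(\omega)\}$, where $L_\lambda$ denotes the Lyapunov exponent of $H_\lambda y=Ey$ and $\gamma\beta(\omega)=\beta(\omega)>0$ by hypothesis.

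Next I would invoke the cited result of \cite{Hou}: for analytic potentials there is a threshold $\lambda_0>0$ such that, whenever $|\lambda|\le\lambda_0$, one has $L_\lambda(E)=0$ for every $E\in\sigma(H_\lambda)$. This is the only external input beyond Theorem \ref{thm}. With it in hand, the corollary follows by contradiction: if $E_0$ were an eigenvalue of $H_\lambda$ for some $|\lambda|\le\lambda_0$, then $E_0\in\sigma(H_\lambda)$, so $L_\lambda(E_0)=0<\beta(\omega)$, placing $E_0$ in the regime where Theorem \ref{thm} forbids eigenvalues---a contradiction. Hence no eigenvalue exists.

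The corollary carries little independent difficulty, since its content lies entirely in Theorem \ref{thm} and the perturbative reducibility behind \cite{Hou}. The only point requiring genuine care is the interface between these two facts: one must use that \cite{Hou} delivers vanishing of $L_\lambda$ on the \emph{entire} spectrum (which is exactly where any eigenvalue, having an $L^2$ eigenfunction, must lie) rather than merely almost everywhere, so that every candidate eigenvalue is actually captured by the strict inequality $0<\beta(\omega)$.
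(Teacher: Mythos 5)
Your proposal is correct and matches the paper's (implicit) argument exactly: the paper derives this corollary by combining Theorem \ref{thm} with the result of \cite{Hou} that for small analytic potentials the Lyapunov exponent vanishes on the spectrum, which is precisely your reasoning, including the observation that any eigenvalue must lie in the spectrum and that analyticity places $\lambda V$ in $A_1$ so the theorem applies with $\gamma=1$.
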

\begin{corollary}
Let $H$ be a quasiperiodic Schr\"odinger operator,
\begin{equation*}
    (H y)(x)=-y^{\prime\prime}(x)+  V(x,\omega x)y(x),
\end{equation*}
where $V:(\R/\Z)^2\to \R$ is  analytic.
Suppose  the frequency $\omega$ satisfies
\begin{equation*}
    \limsup_{k\to \infty}\frac{-\ln ||k\omega||}{k}>0.
\end{equation*}
Then there exists $E_0>0$ such that $H $ does not have any eigenvalue    in regime $[E_0,\infty)$.
\end{corollary}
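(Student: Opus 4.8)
The plan is to deduce the corollary from Theorem \ref{thm} together with the high-energy behaviour of the Lyapunov exponent for analytic potentials. First I would record that an analytic $V:(\R/\Z)^2\to\R$ is in particular Lipschitz in its second variable, so that $V\in A_1$ (take $m=2$, $a_1=0$, $a_2=1$ in \eqref{holder}); hence Theorem \ref{thm} applies with $\gamma=1$ and guarantees that $H$ has no eigenvalue in the regime $\{E\in\R:L(E)<\beta(\omega)\}$. Since $\beta(\omega)>0$ by hypothesis, it therefore suffices to produce $E_0>0$ so that every $E\geq E_0$ that can possibly support an eigenvalue lies in this regime. Because any eigenvalue of the self-adjoint operator $H$ belongs to $\sigma(H)$, it is enough to show that $L(E)<\beta(\omega)$ for all $E\in\sigma(H)$ with $E\geq E_0$; energies in the resolvent set are automatically non-eigenvalues.

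The remaining and essential point is the large-energy estimate on $L$, which is the genuine content of the statement. Here I would invoke the reducibility theory for analytic quasiperiodic cocycles. Writing the base dynamics $x\mapsto(x,\omega x)\bmod\Z^2$ as a two-frequency linear flow on $(\R/\Z)^2$ and viewing $-y''+V(x,\omega x)y=Ey$ as the associated Schr\"odinger cocycle, the analytic (almost) reducibility results recalled in \cite{Hou} give that, for $E$ sufficiently large, the cocycle is reducible to a constant elliptic one along the spectrum, and hence $L(E)=0$ there. Concretely, this yields $E_0>0$ with the property that $E\in\sigma(H)$ and $E\geq E_0$ together force $L(E)=0$, so that $L(E)=0<\beta(\omega)$. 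I would stress that one cannot expect $L(E)$ to be small uniformly for all large $E$: in the spectral gaps the cocycle is uniformly hyperbolic and $L(E)>0$. This is precisely why the result is phrased on the spectrum, and why the reduction in the previous paragraph to $E\in\sigma(H)$ is the correct one.

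Finally I would assemble the two ingredients. Fix $E_0$ as above and take any $E\geq E_0$. If $E\notin\sigma(H)$, then $E$ is not an eigenvalue. If $E\in\sigma(H)$, then $L(E)=0<\beta(\omega)$, so $E$ lies in the regime $\{E\in\R:L(E)<\beta(\omega)\}$ and Theorem \ref{thm} (with $\gamma=1$) shows that $E$ is not an eigenvalue. In either case $E$ fails to be an eigenvalue, whence $H$ has no eigenvalue in $[E_0,\infty)$. The main obstacle is concentrated entirely in the second paragraph: one must translate the continuous two-frequency setting into the hypotheses of the analytic reducibility results of \cite{Hou} and confirm that the large-energy asymptotics do give vanishing Lyapunov exponent \emph{on the spectrum}. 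Once that input is secured, the corollary follows immediately from Theorem \ref{thm}.
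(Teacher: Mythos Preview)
Your proposal is correct and follows exactly the route the paper indicates: the paper does not give a detailed proof but simply remarks that, by \cite{Hou}, for analytic $V$ and large energy the Lyapunov exponent vanishes on the spectrum, and then states the corollary as an immediate consequence of Theorem \ref{thm}. Your write-up supplies precisely this argument, with the added (and appropriate) care of splitting into $E\in\sigma(H)$ versus $E\notin\sigma(H)$ to handle the fact that $L(E)$ need not be small off the spectrum.
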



\section{Proof of Theorem \ref{thm}}
Let $u$ be a solution of $Hu=Eu$.
Define the transfer matrix $T(E,y,x)$ as
\begin{equation}\label{Gtran}
     T(E,y,x)\left[
               \begin{array}{c}
                 u^{\prime}(y) \\
                 u(y) \\
               \end{array}
             \right]=
             \left[
               \begin{array}{c}
                 u^{\prime}(x) \\
                 u(x) \\
               \end{array}
             \right].
\end{equation}
\begin{lemma} \cite{simon1982almost}\label{Lemgordonidea1}
Let $B\in \text{SL}(2,\mathbb{R})$ and  $\varphi$ be a unit vector in $\mathbb{R}^2$,  then $$\max{\{||B^2\varphi||, ||B\varphi||,||B^{-1}\varphi||\}}\geq \frac{1}{4}.$$
\end{lemma}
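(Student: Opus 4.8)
The plan is to treat this as a pure linear algebra statement and exploit the one structural feature available: $B\in\mathrm{SL}(2,\mathbb{R})$ satisfies its characteristic equation. Writing $t=\operatorname{tr}B$, the Cayley--Hamilton theorem together with $\det B=1$ gives
\begin{equation*}
B^2-tB+I=0.
\end{equation*}
This single identity is the engine of the whole argument, because it expresses $B^2$, $B$, $I$, and (after multiplying by $B^{-1}$) also $B^{-1}$ in terms of one another through the scalar $t$. Concretely, applying the identity to the unit vector $\varphi$ yields
\begin{equation*}
\varphi=tB\varphi-B^2\varphi,
\end{equation*}
and multiplying $B^2-tB+I=0$ on the left by $B^{-1}$ gives $B^{-1}=tI-B$, hence
\begin{equation*}
B^{-1}\varphi=t\varphi-B\varphi.
\end{equation*}

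With these two relations in hand I would argue by contradiction. Suppose all three quantities are smaller than $\tfrac14$, i.e. $\|B\varphi\|<\tfrac14$, $\|B^2\varphi\|<\tfrac14$, and $\|B^{-1}\varphi\|<\tfrac14$. First I extract information about the trace from the first relation: taking norms and using the triangle inequality,
\begin{equation*}
1=\|\varphi\|\le|t|\,\|B\varphi\|+\|B^2\varphi\|<\frac{|t|+1}{4},
\end{equation*}
so $|t|>3$; thus a small $B^2\varphi$ forces the trace to be large. Next I feed this lower bound on $|t|$ into the second relation, now using the reverse triangle inequality:
\begin{equation*}
\|B^{-1}\varphi\|=\|t\varphi-B\varphi\|\ge|t|\,\|\varphi\|-\|B\varphi\|=|t|-\|B\varphi\|>3-\frac14=\frac{11}{4},
\end{equation*}
which flatly contradicts $\|B^{-1}\varphi\|<\tfrac14$. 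Hence the assumption fails and $\max\{\|B^2\varphi\|,\|B\varphi\|,\|B^{-1}\varphi\|\}\ge\tfrac14$.

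Honestly there is no serious obstacle here; the only genuine idea is recognizing that the trace $t$ is the right mediating quantity and that Cayley--Hamilton couples $B^2$ and $B^{-1}$ to it simultaneously. The rest is bookkeeping with the triangle and reverse triangle inequalities. The constant $\tfrac14$ is in fact comfortably slack in this chain of estimates, which is reassuring: the argument would survive minor perturbations and does not rely on any delicate optimization. The one point I would double-check is the sign convention and the use of $\det B=1$ in passing from $B^2-tB+I=0$ to $B^{-1}=tI-B$, since everything downstream hinges on having exactly these two clean relations.
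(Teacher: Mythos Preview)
Your argument is correct and is exactly the standard Cayley--Hamilton proof; the paper itself does not supply a proof of this lemma but merely cites \cite{simon1982almost}, where this same trace-based argument originates. There is nothing to compare against in the paper, and your derivation is clean and complete.
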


Let $\frac{p_n}{q_n}$ be the continued fraction expansion to $\omega$.
By \eqref{equ2}, one has
\begin{equation}\label{cont}
    \limsup_{n\to \infty}\frac{\ln q_{n+1}}{q_n}=\beta(\omega).
\end{equation}
Below, $\varepsilon>0$ is   arbitrarily small, and $c$($C$) is small (large) constant depending on $E$, $\omega$ and  the potential $V$.
Since $V\in A_{\gamma}$, there exist  $a_1,a_2,\cdots, a_m$ ($a_1=0$ and $a_m=1$) such that
\eqref{holder} holds.

Write down the eigen-equation $Hy=Ey$ in first order,
\begin{equation}\label{equ3}
     \left[
       \begin{array}{c}
         y ^{\prime}(x)\\
         y (x)\\
       \end{array}
     \right]^{\prime}=\left[
               \begin{array}{cc}
                 0 &V(x,\omega x)-E\\
                 1 & 0 \\
               \end{array}
             \right]\left[
       \begin{array}{c}
         y ^{\prime}(x)\\
         y (x)\\
       \end{array}
     \right].
\end{equation}
By the definition of transfer matrix and Lyapunov exponent (see \cite{furman1997multiplicative}), we have
\begin{equation}\label{equ4}
  || T(E,x,y)||\leq Ce^{(L(E)+\varepsilon)|x-y|}.
\end{equation}
For simplicity, sometimes we ignore the dependence of $E$ in $T(E,x,y)$ and $L(E)$.
%
%
%
By \eqref{cont}, there exists a sequence $\{q_{n_k}\}$ such that
\begin{equation*}
    q_{n_k+1}\geq e^{(\beta(\omega)-\varepsilon)q_{n_k}}.
\end{equation*}
By the property of continued fraction expansion, one has
\begin{equation}\label{small}
    ||q_{n_k}\omega||\leq e^{-(\beta(\omega)-\varepsilon)q_{n_k}},
\end{equation}
where $||x||=\text{dist}(x,\Z)$.

We  start the proof with  some basic Lemmas.
\begin{lemma}\label{lerough}
The following estimate holds,
\begin{equation*}
   \int_{0} ^{q_{n_k}}|V(t,\omega t)-V(t,\omega (t+q_{n_k}))|dt \leq C e^{-(\gamma\beta-\varepsilon) q_{n_k}}.
\end{equation*}
\end{lemma}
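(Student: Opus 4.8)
The plan is to exploit that shifting the second argument of $V$ by $\omega q_{n_k}$ moves it, modulo $1$, by only $\delta:=\norm{q_{n_k}\omega}\le e^{-(\beta-\varepsilon)q_{n_k}}$, by \eqref{small}. Writing $s=\omega q_{n_k}-p_{n_k}$ with $\abs{s}=\delta$ and using that $V$ is $1$-periodic in the second variable, I would replace $V(t,\omega(t+q_{n_k}))$ by $V(t,\omega t+s)$, so that the two sampling points $\omega t\bmod 1$ and $(\omega t+s)\bmod 1$ differ by at most $\delta$ on the circle. If $V$ were globally $\gamma$-H\"older in $y$ the estimate would be immediate: the integrand would be $\le C\delta^\gamma$ pointwise, and integrating over $[0,q_{n_k}]$ would cost only the harmless polynomial factor $q_{n_k}$. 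The only obstruction is the finite set of jump points $a_1,\dots,a_m$, across which the H\"older bound fails.

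Accordingly, I would split $[0,q_{n_k}]$ into a good set and a bad set. Let $S\subset\R/\Z$ be the union of the $m$ arcs of length $\delta=\abs{s}$ abutting the jump points $a_1,\dots,a_m$ (on the side determined by the sign of $s$), so that whenever $\omega t\bmod 1\notin S$ the two points $\omega t$ and $\omega t+s$ lie in the same H\"older piece $[a_i,a_{i+1}]$. On the good set $G=\{t\in[0,q_{n_k}]:\omega t\bmod 1\notin S\}$ the H\"older condition \eqref{holder} gives $\abs{V(t,\omega t)-V(t,\omega t+s)}\le C\delta^\gamma$, whose contribution to the integral is at most $Cq_{n_k}\delta^\gamma$. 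On the bad set $B=[0,q_{n_k}]\setminus G$ I would simply use boundedness, $\abs{V(t,\omega t)-V(t,\omega t+s)}\le 2\sup\abs{V}\le C$.

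The key quantitative point is to bound $\meas(B)$. Substituting $u=\omega t$ and using $\meas(S)\le m\delta$, one gets $\meas(B)=\frac1\omega\meas\{u\in[0,\omega q_{n_k}]:u\bmod 1\in S\}\le\frac1\omega(\omega q_{n_k}+1)\meas(S)\le Cq_{n_k}\delta$, since the interval of length $\omega q_{n_k}$ meets at most $\omega q_{n_k}+1$ unit cells of the circle. Hence the bad-set contribution is at most $Cq_{n_k}\delta$, which since $0<\gamma\le 1$ and $\delta\le 1$ is dominated by $Cq_{n_k}\delta^\gamma$. Combining the two pieces yields $\int_0^{q_{n_k}}\abs{V(t,\omega t)-V(t,\omega(t+q_{n_k}))}\,dt\le Cq_{n_k}\delta^\gamma\le Cq_{n_k}e^{-\gamma(\beta-\varepsilon)q_{n_k}}$. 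Finally I would absorb the polynomial factor: since $\ln q_{n_k}/q_{n_k}\to 0$ one has $q_{n_k}\le e^{\varepsilon q_{n_k}}$ for large $k$, so after relabeling $\varepsilon$ the right-hand side becomes $Ce^{-(\gamma\beta-\varepsilon)q_{n_k}}$, as claimed. I expect the measure estimate of the bad set near the jump points to be the only delicate step; everything else is H\"older continuity together with the elementary change of variables.
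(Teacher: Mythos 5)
Your proof is correct and takes essentially the same route as the paper's: the paper likewise splits $[0,q_{n_k}]$ into a good set $I$ of times $t$ where $\omega t$ stays at distance at least $2e^{-(\beta(\omega)-\varepsilon)q_{n_k}}$ from every translated jump point $a_i+\ell$ (on which the piecewise H\"older condition gives the pointwise bound $Ce^{-(\gamma\beta-\varepsilon)q_{n_k}}$), and a bad set $I^c$ of measure $O\bigl(q_{n_k}e^{-(\beta(\omega)-\varepsilon)q_{n_k}}\bigr)$ handled by boundedness of $V$, with polynomial factors absorbed into $\varepsilon$ exactly as you do. The only differences are cosmetic: your one-sided arcs of length $\delta$ versus the paper's symmetric neighborhoods, and your explicit change of variables $u=\omega t$ for the bad-set measure, which the paper states without detail.
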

\begin{proof}
Let
\begin{equation*}
    I=\{t\in[0,q_{n_k}]: \text{dist}(\omega t,a_i+\ell)\geq 2 e^{-(\beta(\omega)-\varepsilon)q_{n_k}} \text{ for all }i=1,2,\cdots,m-1 \text{ and } \ell=0,1,2,\cdots,q_{n_k}-1\}.
\end{equation*}
By \eqref{small} and the assumption on $V$, one has
\begin{equation*}
    |V(t,\omega t)-V(t,\omega (t+q_{n_k}))|dt \leq  C e^{-(\gamma\beta-\varepsilon) q_{n_k}} \text{ for } t\in I.
\end{equation*}
We also have
\begin{equation*}
    |I^c|\leq Cq_{n_k} e^{-(\beta(\omega)-\varepsilon)q_{n_k}}\leq Ce^{-(\beta(\omega)-\varepsilon)q_{n_k}},
\end{equation*}
where $I^c=[0,q_{n_k}]\backslash I$.
Thus we obtain
\begin{eqnarray*}
   \int_{0} ^{q_{n_k}}|V(t,\omega t)-V(t,\omega (t+q_{n_k}))|dt &=&  \int_{I}|V(t,\omega t)-V(t,\omega (t+q_{n_k}))|dt+ \int_{I^c}|V(t,\omega t)-V(t,\omega (t+q_{n_k}))|dt \\
   &\leq& C e^{-(\gamma\beta-\varepsilon) q_{n_k}} .
\end{eqnarray*}
\end{proof}
\begin{lemma}
The following estimates hold
\begin{equation}\label{Gro1}
    ||T(E,0,q_{n_k})-T(E,q_{n_k},2q_{n_k})||\leq Ce^{(L(E)-\gamma\beta(\omega)+\varepsilon)q_{n_k}},
\end{equation}
and
\begin{equation}\label{Gro2}
    ||T(E,0,-q_{n_k})-T(E,0,q_{n_k})^{-1}||\leq Ce^{(L(E)-\gamma\beta(\omega)+\varepsilon)q_{n_k}}.
\end{equation}
\end{lemma}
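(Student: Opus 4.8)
The plan is to realize each matrix appearing in \eqref{Gro1} and \eqref{Gro2} as the value at ``time'' $q_{n_k}$ of a fundamental solution of the first-order system \eqref{equ3}, and then to compare two such fundamental solutions whose coefficient matrices are $L^1$-close on an interval of length $q_{n_k}$, the closeness being exactly the quantity controlled by Lemma \ref{lerough}. Write $A(x)=\begin{pmatrix}0 & V(x,\omega x)-E\\ 1 & 0\end{pmatrix}$, so that $T(E,a,\cdot)$ solves $\partial_x T(E,a,x)=A(x)T(E,a,x)$ with $T(E,a,a)=I$. The key structural observation is that, since $q_{n_k}\in\Z$ and $V$ is $1$-periodic in its first variable, the shifted matrix $N(x):=T(E,q_{n_k},q_{n_k}+x)$ solves the system with coefficient $A(x)+R(x)$, where $R(x)=\begin{pmatrix}0 & V(x,\omega(x+q_{n_k}))-V(x,\omega x)\\ 0 & 0\end{pmatrix}$ is a rank-one remainder with $\|R(x)\|=|V(x,\omega(x+q_{n_k}))-V(x,\omega x)|$, precisely the integrand of Lemma \ref{lerough}.

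For \eqref{Gro1}, set $M(x)=T(E,0,x)$ and $N(x)=T(E,q_{n_k},q_{n_k}+x)$, so that $M(0)=N(0)=I$, $M(q_{n_k})=T(E,0,q_{n_k})$ and $N(q_{n_k})=T(E,q_{n_k},2q_{n_k})$. Their difference $D=M-N$ solves $D'=A(x)D-R(x)N(x)$ with $D(0)=0$, and variation of constants gives $D(q_{n_k})=-\int_0^{q_{n_k}}T(E,s,q_{n_k})R(s)N(s)\,ds$. I would then invoke the a priori bound \eqref{equ4}: $\|T(E,s,q_{n_k})\|\le Ce^{(L+\varepsilon)(q_{n_k}-s)}$ and $\|N(s)\|=\|T(E,q_{n_k},q_{n_k}+s)\|\le Ce^{(L+\varepsilon)s}$, whose product is $Ce^{(L+\varepsilon)q_{n_k}}$ with the $s$-dependence cancelling. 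Pulling this factor out of the integral and applying Lemma \ref{lerough} to $\int_0^{q_{n_k}}\|R(s)\|\,ds\le Ce^{-(\gamma\beta-\varepsilon)q_{n_k}}$ yields $\|D(q_{n_k})\|\le Ce^{(L-\gamma\beta+2\varepsilon)q_{n_k}}$, which is \eqref{Gro1} after relabelling $\varepsilon$.

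For \eqref{Gro2}, first note $T(E,0,q_{n_k})^{-1}=T(E,q_{n_k},0)$, so the claim compares two backward transfer matrices over length-$q_{n_k}$ intervals. Parametrize $P(x)=T(E,q_{n_k},q_{n_k}-x)$ and $Q(x)=T(E,0,-x)$ for $x\in[0,q_{n_k}]$; these solve $P'=-A(q_{n_k}-x)P$ and $Q'=-A(-x)Q$ with $P(0)=Q(0)=I$, while $P(q_{n_k})=T(E,q_{n_k},0)$ and $Q(q_{n_k})=T(E,0,-q_{n_k})$. By $1$-periodicity in the first variable the matrices $A(q_{n_k}-x)$ and $A(-x)$ again differ only in the upper-right entry, by $|V(-x,\omega(q_{n_k}-x))-V(-x,-\omega x)|$; the substitution $t=-x$ turns this into $|V(t,\omega(t+q_{n_k}))-V(t,\omega t)|$ integrated over $[-q_{n_k},0]$. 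Since the proof of Lemma \ref{lerough} uses only that the bad set $\{t:\ \mathrm{dist}(\omega t,a_i+\ell)\ \text{small}\}$ has controlled measure on a length-$q_{n_k}$ interval, it applies verbatim over $[-q_{n_k},0]$. Running the identical variation-of-constants estimate, now with $\|P(q_{n_k})P(s)^{-1}\|=\|T(E,q_{n_k}-s,0)\|\le Ce^{(L+\varepsilon)(q_{n_k}-s)}$ and $\|Q(s)\|\le Ce^{(L+\varepsilon)s}$, gives $\|P(q_{n_k})-Q(q_{n_k})\|\le Ce^{(L-\gamma\beta+\varepsilon)q_{n_k}}$, i.e. \eqref{Gro2}.

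The hard part will be the bookkeeping of exponents: the uniform exponential growth \eqref{equ4} of each individual transfer matrix is far larger than the target, and could a priori swamp the smallness coming from Lemma \ref{lerough}. This is resolved precisely by the cancellation that the two cocycle factors $T(E,s,q_{n_k})$ and $T(E,q_{n_k},q_{n_k}+s)$ (respectively $T(E,q_{n_k}-s,0)$ and $T(E,0,-s)$) cover complementary sub-intervals, so that their product carries the single exponent $(L+\varepsilon)q_{n_k}$ rather than $2(L+\varepsilon)q_{n_k}$; only then does the $e^{-(\gamma\beta-\varepsilon)q_{n_k}}$ gain survive. A secondary point requiring care is the orientation reversal in \eqref{Gro2} together with the shift of the integration domain to $[-q_{n_k},0]$, which must be checked to fall under the evident extension of Lemma \ref{lerough}.
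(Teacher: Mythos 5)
Your proof is correct and takes essentially the same route as the paper: you exploit $q_{n_k}\in\Z$ and the $1$-periodicity of $V$ in its first variable to realize the shifted transfer matrix as a solution of a perturbed system, then apply variation of constants, Lemma \ref{lerough}, and the crucial cancellation of the two complementary cocycle factors into a single $e^{(L+\varepsilon)q_{n_k}}$ --- the only cosmetic difference being that you compare fundamental matrices while the paper compares vector solutions with an arbitrary common initial vector and then takes a supremum over unit data. As a bonus, your detailed treatment of \eqref{Gro2}, including the observation that Lemma \ref{lerough} extends verbatim to the interval $[-q_{n_k},0]$, fills in the steps that the paper dismisses as ``similar arguments.''
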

\begin{proof}
Consider the two differential equations $y_1,y_2$ on $[0,q_{n_k}]$ with the same initial condition at $0$,
\begin{equation*}
      \left[
       \begin{array}{c}
         y _1^{\prime}(x)\\
         y_1 (x)\\
       \end{array}
     \right]^{\prime}=\left[
               \begin{array}{cc}
                 0 &V(x,\omega x)-E\\
                 1 & 0 \\
               \end{array}
             \right]\left[
       \begin{array}{c}
         y_1 ^{\prime}(x)\\
         y_1 (x)\\
       \end{array}
     \right]
\end{equation*}
and
\begin{equation*}
  \left[
       \begin{array}{c}
         y_2 ^{\prime}(x)\\
         y _2(x)\\
       \end{array}
     \right]^{\prime}=\left[
               \begin{array}{cc}
                 0 &V(x+q_{n_k},\omega (x+q_{n_k}))-E\\
                 1 & 0 \\
               \end{array}
             \right]\left[
       \begin{array}{c}
         y_2 ^{\prime}(x)\\
         y _2(x)\\
       \end{array}
     \right].
\end{equation*}
Let $Y(x)=\left[
            \begin{array}{c}
              y_1^{\prime}(x)-y_2^{\prime}(x) \\
              y_1(x)-y_2(x) \\
            \end{array}
          \right]
$. 
Denote  by
\begin{eqnarray*}
   F(t) &=& \left[
                   \begin{array}{cc}
                     0 & V(t,\omega t)-V(t,\omega (t+q_{n_k})) \\
                     0& 0\\
                   \end{array}
                 \right]T(E,0,t)\left[
                   \begin{array}{c}
                     y_2^\prime(0) \\
                     y_2(0) \\
                   \end{array}
                 \right] \\
   &=& \left[
                   \begin{array}{cc}
                     0 & V(t,\omega t)-V(t,\omega (t+q_{n_k})) \\
                     0& 0\\
                   \end{array}
                 \right]\left[
                          \begin{array}{c}
                            y_2 ^\prime(t)\\
                          y_2(t) \\
                          \end{array}
                        \right],
\end{eqnarray*}
and
\begin{equation*}
    T(t)=T(E,0,t).
\end{equation*}
Thus
\begin{equation}\label{Gmar15}
    Y^\prime(x)=F(x)+\left[
               \begin{array}{cc}
                 0 &V(x,\omega x)-E\\
                 1 & 0 \\
               \end{array}
             \right]Y(x),
\end{equation}
and $Y(0)=0$,  since $V(x+q_{n_k},\omega (x+q_{n_k}))=V(x,\omega (x+q_{n_k}))$.
By  the constant variation method, we obtain that
\begin{eqnarray}
  Y(x) &=&  T(x)\int_0^xT^{-1}(t) F(t)dt\nonumber\\
   &=&\int_0^x  T(x)T^{-1}(t) F(t)dt.\label{small2}
\end{eqnarray}
We give the proof of \eqref{small2} in the Appendix.

By   Lemma \ref{lerough},
we have for $0<x\leq q_{n_k}$,
\begin{eqnarray}
 \int_{0} ^{x}||F(t) || dt&\leq& || \left[
                          \begin{array}{c}
                            y_2 ^\prime(t)\\
                          y_2(t) \\
                          \end{array}
                        \right]||  \int_{0}^x ||\left[
                   \begin{array}{cc}
                     0 & V(t,\omega t)-V(t,\omega (t+q_{n_k})) \\
                     0& 0\\
                   \end{array}
                 \right] ||dt\nonumber \\
    &\leq&  e^{-(\gamma\beta-\varepsilon) q_{n_k}}||T(E,0,t)||\;\;|| \left[
                          \begin{array}{c}
                            y_2 ^\prime(0)\\
                          y_2(0) \\
                          \end{array}
                        \right]||\nonumber\\
    &\leq&  Ce^{-(\gamma\beta-\varepsilon) q_{n_k}}e^{(L+\varepsilon)t}|| \left[
                          \begin{array}{c}
                            y_2 ^\prime(0)\\
                          y_2(0) \\
                          \end{array}
                        \right]||,\label{small3}
\end{eqnarray}
where the second inequality holds by \eqref{equ4}.

For $x>t$, by \eqref{equ4} again,
one has
\begin{eqnarray}\label{small1}
  ||T(x)T^{-1}(t)|| &=& ||T(E,t,x)||\nonumber \\
  &\leq& Ce^{(L+\varepsilon)|x- t|}.
\end{eqnarray}
By \eqref{small2}, \eqref{small3} and \eqref{small1}, we obtain
\begin{equation*}
    || Y(q_{n_k})||\leq  Ce^{(L-\gamma\beta +\varepsilon)q_{n_k}}|| \left[
                          \begin{array}{c}
                            y_2 ^\prime(0)\\
                          y_2(0) \\
                          \end{array}
                        \right]||,
\end{equation*}
which implies \eqref{Gro1} by the arbitrary choice of $ \left[
                          \begin{array}{c}
                            y_2 ^\prime(0)\\
                          y_2(0) \\
                          \end{array}
                        \right]. $
%

By the fact that
$T(E,0,q_{n_k})^{-1}=T(E,q_{n_k},0) $ and following the similar arguments of proof of \eqref{Gro1},
we can prove \eqref{Gro2}.
\end{proof}
The following lemma is well known for Schr\"odinger operator. We give the proof for completeness.
\begin{lemma}\label{key2}
Suppose $u $ is an eigensolution, that is $Hu=Eu$ and $u\in L^2(\R)$ for some $E$, then
\begin{equation}\label{Gsup}
    \limsup_{x\to\infty}|| \left[
                            \begin{array}{c}
                              u^\prime(x)\\
                              u(x) \\
                            \end{array}
                          \right]
    ||=0.
\end{equation}

\end{lemma}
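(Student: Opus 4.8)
The plan is to use only two inputs: the eigenequation $u''=(V(x,\omega x)-E)u$ together with the uniform bound $|V(x,\omega x)-E|\le M$ that follows from the finiteness in \eqref{holder}, and the hypothesis $u\in L^2(\R)$. Setting $g(x)=u(x)^2+u'(x)^2$, so that $g^{1/2}$ is exactly the quantity in \eqref{Gsup}, my goal is to prove $g(x)\to 0$ as $x\to\infty$ (and symmetrically as $x\to-\infty$). I may assume $u\not\equiv 0$, since otherwise the claim is trivial.

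First I would record a differential inequality for $g$. Differentiating and substituting $u''=(V-E)u$ gives $g'=2u'u+2u'u''=2u'u\,(1+V(x,\omega x)-E)$, so $|g'(x)|\le C g(x)$ with $C=1+M$, using $2|u||u'|\le u^2+(u')^2$. In particular $g'\ge -Cg$, so $e^{Ct}g(t)$ is nondecreasing, which yields $g(t)\ge g(x)e^{-C}$ for all $t\in[x,x+1]$. Integrating over this unit interval produces the pointwise-from-average estimate $g(x)\le e^{C}\int_x^{x+1}g(t)\,dt$. Since $u\in L^2$ forces $\int_x^{x+1}u^2\,dt\to 0$, the whole matter reduces to showing $\int_x^{x+1}(u')^2\,dt\to 0$, for which it suffices to prove $u'\in L^2$.

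The core of the argument is therefore establishing $u'\in L^2([0,\infty))$, the negative half-line being identical. Integrating by parts gives $\int_0^R(u')^2\,dt=u(R)u'(R)-u(0)u'(0)-\int_0^R(V-E)u^2\,dt$, where the last integral converges absolutely as $R\to\infty$ because $|V-E|\le M$ and $u\in L^2$. The left-hand side is nondecreasing in $R$, so if it diverged we would get $u(R)u'(R)\to+\infty$; but $u(R)u'(R)=\tfrac12\frac{d}{dR}u(R)^2$, so $u(R)^2$ would then grow at least linearly in $R$, contradicting $u\in L^2$. Hence $\int_0^\infty(u')^2\,dt<\infty$.

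Combining the two previous paragraphs, $\int_x^{x+1}g\,dt\to 0$ and therefore $g(x)\to 0$, which is precisely \eqref{Gsup}. I expect the main obstacle to be the boundary term $u(R)u'(R)$ in the integration by parts: controlling it is exactly the place where the $L^2$ hypothesis enters essentially rather than routinely. As an alternative that sidesteps this bookkeeping, one may note that $u''=(V-E)u\in L^2(\R)$, so $u\in H^2(\R)$ and hence $u,u'\in H^1(\R)$, whence both tend to $0$ at infinity by the Sobolev embedding $H^1(\R)\hookrightarrow C_0(\R)$.
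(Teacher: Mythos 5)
Your proof is correct, but it handles the crucial step by a genuinely different route than the paper. Both arguments open the same way: a local Gronwall-type estimate for the first-order system \eqref{equ3}, showing that the pointwise value of $u(x)^2+u'(x)^2$ is controlled by its integral over an adjacent unit interval (in the paper this is the assertion that $\|(u'(x_0),u(x_0))^T\|\ge\epsilon$ forces $\int_{x_0-1}^{x_0+1}(|u'|+|u|)\,dx\ge c\epsilon$). The difference is how the derivative term is then disposed of. The paper cites a \emph{local} a priori estimate of Simon \cite[Lemma 3.1]{simonb},
\begin{equation*}
    \int_{x_0-1}^{x_0+1}|u'(x)|\,dx\le C\int_{x_0-2}^{x_0+2}|u(x)|\,dx,
\end{equation*}
which reduces everything to the local $L^1$ norm of $u$ alone, and then invokes $u\in L^2(\R)$. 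You instead prove the \emph{global} statement $u'\in L^2(\R)$ by integration by parts, with the neat observation that divergence of $\int_0^R(u')^2$ would force $u(R)u'(R)\to+\infty$ and hence linear growth of $u(R)^2$, contradicting square-integrability; then both $\int_x^{x+1}u^2$ and $\int_x^{x+1}(u')^2$ tend to zero and the Gronwall step finishes. Your route is self-contained (no external citation) and proves strictly more than needed, while the paper's is shorter by outsourcing the derivative bound to Simon's lemma. Your closing alternative is cleaner still: since $u'$ is locally absolutely continuous, the a.e.\ identity $u''=(V-E)u$ holds distributionally, so $u,u''\in L^2(\R)$ gives $u\in H^2(\R)$ (e.g.\ via Fourier transform), and $u,u'\in H^1(\R)\hookrightarrow C_0(\R)$ yields \eqref{Gsup} at once; spelling out that identification of the a.e.\ and distributional derivatives is the only point worth adding there, given that $V$ is merely bounded measurable in $x$.
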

\begin{proof}
Suppose
\begin{equation*}
   || \left[
                            \begin{array}{c}
                              u^\prime(x_0)\\
                              u(x_0) \\
                            \end{array}
                          \right]
    ||\geq \epsilon.
\end{equation*}
By equation \eqref{equ3},  one has
\begin{equation*}
  \int_{x_0-1}^{x_0+1}  (|u^\prime(x)|+|u (x)|)dx\geq  c\epsilon.
\end{equation*}
By a result of \cite[Lemma 3.1]{simonb}, we have
\begin{equation*}
    \int_{x_0-1}^{x_0+1}|u^{\prime}(x)|dx\leq  C\int_{x_0-2}^{x_0+2}|u (x)|dx.
\end{equation*}
Thus
\begin{equation*}
  \int_{x_0-2}^{x_0+2} | u(x)|dx\geq  c\epsilon,
\end{equation*}
which leads to \eqref{Gsup} by the fact that $u\in L^2(\R)$.
\end{proof}
{\bf Proof of Theorem \ref{thm}}
\begin{proof}
Let $E$ be such that $L(E)<\gamma\beta(\omega)$.
Suppose $Hy=Ey$ and $y\in L^2(\R)$.
Setting $B=T(E,0,q_{n_k})$ and applying Lemma \ref{Lemgordonidea1},
we have
 $$\max{\{||B^2\varphi||, ||B\varphi||,||B^{-1}\varphi||\}}\geq \frac{1}{4},$$
 where
 $\varphi=\left[
            \begin{array}{c}
              y^\prime(0) \\
              y(0) \\
            \end{array}
          \right]
 $  is unit.
 We claim that
 \begin{equation*}
   \max{\{|| \left[
                            \begin{array}{c}
                              y^\prime(-q_{n_k})\\
                              y(-q_{n_k}) \\
                            \end{array}
                          \right]
    ||,|| \left[
                            \begin{array}{c}
                              y^\prime(q_{n_k})\\
                              y(q_{n_k}) \\
                            \end{array}
                          \right]
    ||,|| \left[
                            \begin{array}{c}
                              y^\prime(2q_{n_k})\\
                              y(2q_{n_k}) \\
                            \end{array}
                          \right]
    ||\}}\geq \frac{1}{8}.
 \end{equation*}
 If $||\left[
                            \begin{array}{c}
                              y^\prime(q_{n_k})\\
                              y(q_{n_k}) \\
                            \end{array}
                          \right]||=||B\varphi||\geq \frac{1}{4}$, there is nothing to prove.
 If $||B^{-1}\varphi||\geq \frac{1}{4}$, by \eqref{Gro2},
 \begin{equation*}
    ||\left[
                            \begin{array}{c}
                              y^\prime(-q_{n_k})\\
                              y(-q_{n_k}) \\
                            \end{array}
                          \right]||=||T(E,0,-q_{n_k})\varphi||\geq \frac{1}{8}.
 \end{equation*}
 If $||B^{2}\varphi||\geq \frac{1}{4}$ and $||B \varphi||< \frac{1}{4} $, by \eqref{Gro1},
 \begin{eqnarray*}
    ||\left[
                            \begin{array}{c}
                              y^\prime(2q_{n_k})\\
                              y(2q_{n_k}) \\
                            \end{array}
                          \right]-B^2\varphi|| &=& ||T(E,q_{n_k},2q_{n_k})B\varphi-B^2\varphi|| \\
    &\leq &||T(E,q_{n_k},2q_{n_k})-B|| \;\;\;||B\varphi|| \\
    &\leq& e^{(L(E)-\gamma\beta(\omega)+\varepsilon)q_{n_k}}\leq \frac{1}{8}.
 \end{eqnarray*}
 This implies
 \begin{equation*}
    ||\left[
                            \begin{array}{c}
                              y^\prime(2q_{n_k})\\
                              y(2q_{n_k}) \\
                            \end{array}
                          \right]||\geq \frac{1}{8}.
 \end{equation*}
 We finish the proof of the claim. By Lemma \ref{key2}, this is impossible.

\end{proof}
\section{Appendix. Proof of \eqref{small2}}
\begin{proof}
Let $\widetilde{Y}$ be
\begin{eqnarray}
 \widetilde{Y}(x) &=&  T(x)\int_0^xT^{-1}(t) F(t)dt\nonumber\\
   &=&\int_0^x  T(x)T^{-1}(t) F(t)dt.\label{small2mar15}
\end{eqnarray}
Obviously,
\begin{equation*}
  \widetilde{Y}(0)=0.
\end{equation*}
By the definition of $T(x)$ and \eqref{Gtran}, we have
\begin{equation}\label{Gmar15app}
  T^{\prime}(x)=\left[
               \begin{array}{cc}
                 0 &V(x,\omega x)-E\\
                 1 & 0 \\
               \end{array}
             \right]T(x).
\end{equation}
By \eqref{small2mar15} and \eqref{Gmar15app},
 we have
 \begin{equation*}
    \widetilde{Y}^\prime(x)=F(x)+\left[
               \begin{array}{cc}
                 0 &V(x,\omega x)-E\\
                 1 & 0 \\
               \end{array}
             \right]\widetilde{Y}(x).
\end{equation*}
Thus $Y$ and $ \widetilde{Y}$ satisfy the same differential equation and initial condition. This implies $Y\equiv \widetilde{Y}$.
\end{proof}
 \section*{Acknowledgments}
I would like to thank Svetlana Jitomirskaya for comments on earlier versions of the manuscript and   the  anonymous referee for careful reading of
the manuscript that has led to an important improvement.
 This research was   supported   by  the AMS-Simons Travel Grant (2016-2018), NSF DMS-1401204 and NSF  DMS-1700314.


\footnotesize


\begin{thebibliography}{10}

\bibitem{zhou}
A.~Avila, J.~You, and Q.~Zhou.
\newblock Sharp phase transitions for the almost {M}athieu operator.
\newblock {\em Duke Math. J.}, 166(14):2697--2718, 2017.

\bibitem{MR3623250}
I.~Binder, D.~Kinzebulatov, and M.~Voda.
\newblock Non-perturbative localization with quasiperiodic potential in
  continuous time.
\newblock {\em Comm. Math. Phys.}, 351(3):1149--1175, 2017.

\bibitem{damanik}
D.~Damanik.
\newblock Schr{\"o}dinger operators with dynamically defined potentials.
\newblock {\em Ergodic Theory and Dynamical Systems}, 37(6):1681--1764, 2017.

\bibitem{damanik2000generalization}
D.~Damanik and G.~Stolz.
\newblock A generalization of gordon’s theorem and applications to
  quasiperiodic {S}chr{\"o}dinger operators.
\newblock {\em Electronic Journal of Differential Equations}, 2000(55):1--8,
  2000.

\bibitem{furman1997multiplicative}
A.~Furman.
\newblock On the multiplicative ergodic theorem for uniquely ergodic systems.
\newblock In {\em Annales de l'Institut Henri Poincare (B) Probability and
  Statistics}, volume~33, pages 797--815. Elsevier, 1997.

\bibitem{gordon1976point}
A.~Y. Gordon.
\newblock The point spectrum of the one-dimensional {S}chr\"odinger operator.
\newblock {\em Uspekhi Matematicheskikh Nauk}, 31(4):257--258, 1976.

\bibitem{gordon2017absence}
A.~Y. Gordon and A.~Nemirovski.
\newblock Absence of eigenvalues for quasi-periodic lattice operators with
  {L}iouville frequencies.
\newblock {\em Int. Math. Res. Not. IMRN}, (10):2948--2963, 2017.

\bibitem{han1}
R.~Han.
\newblock Continuity of measure of the spectrum for schr\"{o}dinger operators
  with potentials driven by shifts and skew-shifts on tori.
\newblock {\em Preprint}.

\bibitem{han}
R.~Han and S.~Jitomirskaya.
\newblock Full measure reducibility and localization for quasiperiodic {J}acobi
  operators: {A} topological criterion.
\newblock {\em Adv. Math.}, 319:224--250, 2017.

\bibitem{Hou}
X.~Hou and J.~You.
\newblock Almost reducibility and non-perturbative reducibility of
  quasi-periodic linear systems.
\newblock {\em Invent. Math.}, 190(1):209--260, 2012.

\bibitem{Ilya}
S.~Jitomirskaya and I.~Kachkovskiy.
\newblock {$L^2$}-reducibility and localization for quasiperiodic operators.
\newblock {\em Math. Res. Lett.}, 23(2):431--444, 2016.

\bibitem{liu}
S.~Jitomirskaya and W.~Liu.
\newblock Arithmetic spectral transitions for the maryland model.
\newblock {\em Communications on Pure and Applied Mathematics},
  70(6):1025--1051, 2017.

\bibitem{liu1}
S.~Jitomirskaya and W.~Liu.
\newblock Universal hierarchical structure of quasiperiodic eigenfunctions.
\newblock {\em Ann. of Math. (2)}, 187(3):721--776, 2018.

\bibitem{liu2}
S.~Jitomirskaya and W.~Liu.
\newblock Universal reflective-hierarchical structure of quasiperiodic
  eigenfunctions and sharp spectral transition in phase.
\newblock {\em arXiv preprint arXiv:1802.00781}, 2018.

\bibitem{mavi}
S.~Jitomirskaya and R.~Mavi.
\newblock Dynamical bounds for quasiperiodic {S}chr{\"o}dinger operators with
  rough potentials.
\newblock {\em International Mathematics Research Notices}, 2017(1):96--120,
  2016.

\bibitem{yang}
S.~Jitomirskaya and F.~Yang.
\newblock Singular continuous spectrum for singular potentials.
\newblock {\em Comm. Math. Phys.}, 351(3):1127--1135, 2017.

\bibitem{zhang}
S.~Jitomirskaya and S.~Zhang.
\newblock Quantitative continuity of singular continuous spectral measures and
  arithmetic criteria for quasiperiodic schr{\"o}dinger operators.
\newblock {\em arXiv preprint arXiv:1510.07086}, 2015.

\bibitem{marx}
C.~A. Marx and S.~Jitomirskaya.
\newblock Dynamics and spectral theory of quasi-periodic {S}chr\"odinger-type
  operators.
\newblock {\em Ergodic Theory Dynam. Systems}, 37(8):2353--2393, 2017.

\bibitem{simon1982almost}
B.~Simon.
\newblock Almost periodic {S}chr{\"o}dinger operators: a review.
\newblock {\em Advances in Applied Mathematics}, 3(4):463--490, 1982.

\bibitem{simonb}
B.~Simon.
\newblock Bounded eigenfunctions and absolutely continuous spectra for
  one-dimensional {S}chr\"odinger operators.
\newblock {\em Proc. Amer. Math. Soc.}, 124(11):3361--3369, 1996.

\end{thebibliography}
\end{document}